\newcommand{\etal}{\emph{et al.\@}}
\mathchardef\mhyphen="2D
\newcommand{\es}{\epsilon}
\newcommand{\append}{\mathrel{\shortparallel}}
\newcommand{\getsappend}{\overset{\raisebox{-1pt}{\tiny$\;\shortparallel$}}{\gets}}
\newcommand{\True}{\texttt{T}}
\newcommand{\False}{\texttt{F}}
\newcommand{\bits}{\{0,1\}}
\newcommand{\NN}{\mathbb{N}}
\DeclarePairedDelimiter\abs{\lvert}{\rvert}
\newcommand{\defineterm}[1]{\textsl{\color{red!40!black}#1}}
\newcommand{\heading}[1]{\medskip\noindent\textsc{#1\@.}}
\newcommand{\figref}[1]{Fig.\@~\ref{#1}}
\newcommand{\Figref}[1]{Figure~\ref{#1}}
\newcommand{\secref}[1]{Sec.\@~\ref{#1}}
\newcommand{\advA}{\mathcal{A}}
\newcommand{\Adv}{\mathbf{Adv}}
\newcommand{\keysp}{\mathcal{K}}
\newcommand{\msgsp}{\mathcal{M}}
\newcommand{\ctxtsp}{\mathcal{C}}
\newcommand{\ct}{\mathit{ct}}
\newcommand{\adsp}{\mathcal{AD}}
\newcommand{\ad}{\mathit{ad}}
\newcommand{\btsp}{\mathcal{B}\mathit{t}}
\newcommand{\bt}{\mathit{bt}}
\newcommand{\enc}{\mathrm{enc}}
\newcommand{\dec}{\mathrm{dec}}
\newcommand{\SAFE}{\mathrm{SAFE}}
\newcommand{\IND}{\mathrm{IND}}
\newcommand{\INT}{\mathrm{INT}}
\newcommand{\safe}{\mathrm{safe}}
\newcommand{\ind}{\mathrm{ind}}
\renewcommand{\int}{\mathrm{int}}
\newcommand{\Odec}{\mathrm{Dec}}
\newcommand{\NEXT}{\mathrm{nxt}}
\newcommand{\IV}{\mathrm{IV}}
\newcommand{\taglen}{\mathrm{taglen}}
\newcommand{\encode}{\mathrm{encode}}
\newcommand{\decode}{\mathrm{decode}}
\newcommand{\mav}{\mathrm{mav}}
\begin{document}
\fancyhead{}

\title{Efficiency Improvements for Encrypt-to-Self}

\author{Jeroen Pijnenburg}
\affiliation{Royal Holloway, University of London, Egham, UK}
\email{{jeroen.pijnenburg.2017}@rhul.ac.uk}
\author{Bertram Poettering}
\orcid{0000-0001-6525-5141}
\affiliation{IBM Research--Zurich, Rüschlikon, Switzerland}
\email{{poe}@zurich.ibm.com}

\begin{abstract}
Recent work by Pijnenburg and Poettering \mbox{(ESORICS'20)} explores the novel cryptographic Encrypt-to-Self primitive that is dedicated to use cases of symmetric encryption where encryptor and decryptor coincide.
The primitive is envisioned to be useful whenever a memory-bounded computing device is required to encrypt some data with the aim of temporarily depositing it on an untrusted storage device.
While the new primitive protects the confidentiality of payloads as much as classic authenticated encryption primitives would do,
it provides considerably better authenticity guarantees:
Specifically,
while classic solutions would completely fail in a context involving user corruptions,
if an encrypt-to-self scheme is used to protect the data,
all ciphertexts and messages fully remain unforgeable.

\quad
To instantiate their encrypt-to-self primitive, Pijnenburg {\etal} propose a mode of operation of the compression function of a hash function,
with a carefully designed encoding function playing the central role in the serialization of the processed message and associated data.
In the present work we revisit the design of this encoding function.
Without questioning its adequacy for securely accomplishing the encrypt-to-self job,
we improve on it from a technical/implementational perspective
by proposing modifications that alleviate certain conditions that would inevitably require implementations to disrespect memory alignment restrictions imposed by the word-wise operation of modern CPUs,
ultimately leading to performance penalties.
Our main contributions are thus to propose an improved encoding function,
to explain why it offers better performance,
and to prove that it provides as much security as its predecessor.
We finally report on our open-source implementation of the encrypt-to-self primitive based on the new encoding function.
\end{abstract}

\maketitle

\AddToShipoutPicture*{\raisebox{26cm}{\hspace{4.6cm}\parbox{13cm}{\normalfont\footnotesize An extended abstract of this article was accepted for presentation at \href{https://www.cysarm.org/}{CYSARM 2020}. This is a pre-publication authors' copy.}}}

\section{Introduction}
\label{sec:introduction}

\heading{Encrypt-to-Self}
Assume a resource constrained computing device like a smartcard or a TPM chip that is required to temporarily or permanently store a record of data that is larger than what would fit into its memory capacity.
If the device is connected to other devices, e.g., to a dedicated storage server,
a nearby solution would be to transfer the data to the latter and retrieve it from there when needed.
In this article we focus on secure solutions for this,
meaning that the storage server is trusted with as little as possible.
In particular,
with a secure solution,
the storage server should not be able to recover any non-trivial information about the data (confidentiality),
nor should it be able to alter or manipulate the data (integrity, authenticity).

At first sight one might come to the conclusion that an immediate solution is implied by authenticated encryption~\cite{CCS:Rogaway02}.
The resource constrained device would generate and hold the key,
and the storage server would see just ciphertexts.
Seemingly, any off-the-shelf authenticated encryption scheme, like AES-GCM~\cite{NIST:SP800-38D} or ChaCha20/Poly1305~\cite{rfc8439} or OCB3~\cite{rfc7253}, would do the job.

Note that in the described scenario,
the party encrypting and decrypting is the same.
This motivated recent work by Pijnenburg and Poettering~(\textbf{PP})
to coin the term Encrypt-to-Self (\textbf{EtS})
for the adequate type of encryption~\cite{ESORICS:PijPoe20,EPRINT:PijPoe20b}.
As PP point out, standard authenticated encryption does \emph{not} manifest a secure solution to the EtS challenge.
The reason for this is the lack of security in case of user corruptions.
A corruption is an attack where the adversary retrieves a copy of the key of an honest user.
Such a condition can result from side-channel analysis, physical inspection, a computer break-in, leaked backup copies, etc.
If standard encryption techniques are used in the EtS setting,
all security is immediately lost in the moment a corruption based attack happens:
The adversary can decrypt all ciphertexts,
and it can create forgeries on arbitrary self-chosen messages simply by encrypting them.
The authors of~\cite{ESORICS:PijPoe20} identify this as an issue,
and argue that satisfactory EtS solutions should not fail that drastically.

This highlights the constant arms race between attackers and defenders: manifesting itself in cryptology with on one side the design of new primitives, protocols and improved security models and the analysis of said designs on the other side.
The EtS primitive is of particular interest with an explosion of cloud adoption in 2020.
In this article we take another look at the recently introduced EtS primitive \cite{ESORICS:PijPoe20} and after careful analysis identify optimizations to improve upon the proposed construction.

\heading{Security of Encrypt-to-Self}
As suggested by PP, authenticated encryption does not represent a secure solution to~EtS.
Investigating and evaluating (better?\@) suitable candidates for EtS requires first identifying what level of security actually can be reached in the EtS setting.
In the following we discuss this,
considering aspects of confidentiality and authenticity separately.
Regarding confidentiality,
it is clear that no EtS candidate whatsoever could protect message contents any better than a general encryption scheme.
The reason is as follows:
If the constrained device transforms a message to a ciphertext,
it does so with the goal of being able to later recover the message from the ciphertext,
using key material that it stores locally.
Now, if the adversary performs a user corruption,
it retrieves a full copy of this key material,
bringing it into the position of being able to recover the message using exactly the same algorithms as the device would.
That is, after a corruption, no confidentiality can remain.
The situation is different for authentication.
We illustrate this by giving two intuitive reasons:
Firstly,
a construction could, in principle, use different keys for authenticating and verifying ciphertexts,
for instance by employing a signature scheme.
After an authentication key is used,
it would be securely erased so that a corruption will not leak it to an adversary.
The verification key would be kept and remain intact until a decryption is conducted.
In this case, clearly, a corruption would not leak enough information to enable the adversary to forge new valid ciphertexts.
Secondly,
while a corruption attack leaks all information stored at a user,
it does not change this information.
Thus, in the EtS setting, the encryptor could keep for itself a small amount of information about the ciphertext,
in such a way that even if this information is leaked it still remains impossible to forge a new ciphertext matching it.
More concretely,
if storing a message corresponds with encrypting it, sending the ciphertext to the storage server, and keeping a hash value of the ciphertext in a local registry,
then forging a ciphertext would necessarily require, even after a corruption, finding a collision for the hash function.
The conclusion drawn in~\cite{ESORICS:PijPoe20} from these and similar examples is that in the EtS setting a level of authentication is reachable that goes beyond what is in reach with standard authenticated encryption.
In~\cite{ESORICS:PijPoe20},
PP~go on and propose a security model that formalizes EtS security in the presence of user corruptions.
The analyses in the present article are conducted with respect to their model.

\heading{Constructions of Encrypt-to-Self}
If plain authenticated encryption is not a sufficient solution for~EtS,
how do we construct a solution that is?
It turns out that the encrypt-then-hash (\textbf{EtH}) construction suggested above is secure in the model of~PP.
Recall that EtH processes the message in two passes:
first an encryption pass is conducted to derive the ciphertext,
then a hashing pass is conducted to derive the tag from the ciphertext.
While this approach is generic and robust,
for requiring two passes it is not very efficient.
Indeed,
the alternative approach explored by PP in~\cite{ESORICS:PijPoe20}
entangles encryption and hashing into one operation.
This results in a substantial saving of computational work.

\begin{figure*}[h]
  \centering
  \includegraphics{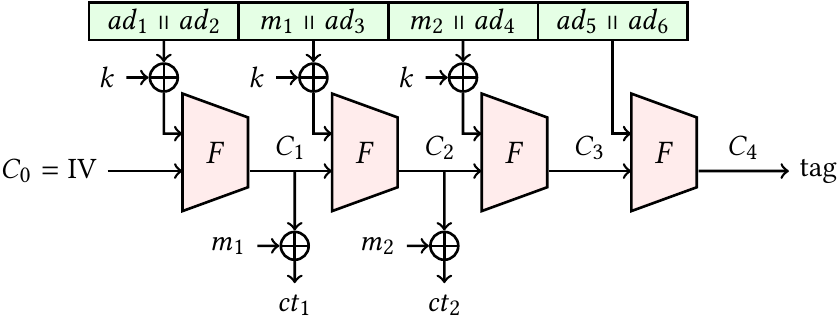}
  \caption[Principle of EtS construction]{\ignorespaces
    Principle of the EtS construction from~\cite{ESORICS:PijPoe20}.
    (Less important details were removed for clarity.)
    An input consisting of a key~$k$, a message~$m$, and associated data~$\ad$,
    is transformed into a ciphertext~$\ct$ and a tag.
    The message is encoded into blocks $m_1,m_2,\ldots$,
    the associated data is encoded into blocks $\ad_1,ad_2,\ldots$,
    and
    the ciphertext consists of the sequence $\ct_1,\ct_2,\ldots$
  \unskip}
  \label{fig:ets:principle}
\end{figure*}

\Figref{fig:ets:principle} provides an overview of (a simplified version of) the EtS scheme from~\cite{ESORICS:PijPoe20}.
\unskip\footnote{\ignorespaces
  A fairly similar construction was proposed by Dodis {\etal}~\cite{C:DGRW18} as a solution to a rather different problem.
\unskip}
Function~$F$ represents the compression function of a Merkle--Damgård (\textbf{MD}) hash function,
that is,
it takes $d\/$~bits of data input and $c$~bits of chain input,
and (deterministically) transforms these inputs to $c$~bits of chain output.
Such compression operations are at the heart of common hash functions like SHA256 and SHA512~\cite{NIST:FIPS180-4},
and are routinely assumed to behave pseudo-randomly.
\unskip\footnote{\ignorespaces
  More formally, they are assumed to behave like a random oracle.
\unskip}
The EtS construction of~\cite{ESORICS:PijPoe20} starts with splitting the message input and optional associated-data input into sequences of blocks $(m_i)$ and~$(\ad_i)$.
It then iteratively computes the MD hash value of these blocks,
\unskip\footnote{\ignorespaces
  In the figure, the intermediate chain values are labelled $C_0,\ldots,C_3$, and the final hash value is labelled~$C_4$.
\unskip}
where all message carrying parts are additionally protected by XORing the EtS key into the corresponding compression function input.
Intuitively, working the key into the chain state in this way ensures that intermediate values $C_1,C_2,\ldots$ are distributed uniformly at random from an adversary's perspective.
This is exploited by employing these values as masks for one-time pad encrypting the message blocks $m_1,m_2,\ldots$ into ciphertext blocks $\ct_1,\ct_2,\ldots$
Overall, this explains how the approach of~\cite{ESORICS:PijPoe20} achieves both confidentiality and authenticity in one pass.

\heading{Encoding for Encrypt-to-Self Construction}
Our description of the EtS scheme of~\cite{ESORICS:PijPoe20} omits an important detail,
namely how precisely the message and associated data is split and formatted as compression function inputs.
Quite obviously,
the encoding has to be injective,
as otherwise it would be easy to find two different pairs $(\ad,m)$ that result in the same tag.
In~\cite{ESORICS:PijPoe20},
PP~propose a suitable and compact yet fairly technical encoding that assumes that the compression function is one-bit \emph{tweakable}~\cite{JC:LisRivWag11}.
That is,
it assumes that the compression function takes a total of $d+c+1$ bits on input
($d\/$~data bits, $c$~chain bits, and $1$~tweak bit),
in order to output $c$~bits of chain state.

While skipping many details of the encoding scheme from~\cite{ESORICS:PijPoe20},
we shed light on two of its properties that are most relevant for the present article.
Firstly,
if a message block~$m$ of length smaller than~$c$ is to be processed,
the block is formatted as $\abs m\append m\append0^*$.
That is, the encoding consists of three concatenated components:
the length of the message (in bytes), the message itself, and a stretch of null bytes so that the desired overall length is reached.
Secondly,
when the key is XORed into the compression function input blocks,
it is actually only XORed into the prefix that contains the message.
For instance,
in the second iteration shown in \figref{fig:ets:principle},
function~$F$ is evaluated on input $(k\oplus m_1)\append\ad_3\append C_1$
rather than on $k\oplus(m_1,\ad_3)\append C_1$.
\unskip\footnote{\ignorespaces
  This notation omits the tweaking bit for clarity of exposition.
\unskip}
As the security argument provided by~PP makes evident,
this is indeed sufficient for security~\cite{ESORICS:PijPoe20}.

\heading{Our Approach}
We propose two improvements for the encoding scheme of~PP~\cite{ESORICS:PijPoe20}.
These are not related to an aspect of security,
but rather to efficiency.
\unskip\footnote{\ignorespaces
  Both the PP scheme and our scheme are provably secure with the same bounds.
  With respect to security, the two schemes are thus equivalent.
  With respect to efficiency, our scheme is superior.
\unskip}
We explain our modifications in detail in \secref{sec:construction},
but we anticipate some details here.
The first modification reconsiders
the encoding $\abs m\append m\append0^*$ of the message (see above).
We note that implementing this will require shifting every message byte in computer memory by one position.
It turns out that due to the word-wise organization of computer memory,
a shift-by-one operation is considerably more expensive that one might assume at first.
\unskip\footnote{\ignorespaces
  Two quantities play a role here:
  Memory move operations should be by multiples of 64~bits (8~bytes) due to the register size of modern CPUs,
  and they should be by multiples of 32~bytes by the size of the cache lines.
\unskip}
We alleviate this efficiency bottleneck by changing the padding to $m\append0^*\append\abs m$
which does not require shifting by a single byte (yet remains injective).
The second modification reconsiders how the key~$k$ is XORed into the compression function inputs.
Here we observe that in most use cases of EtS it should be expected that the associated data string is shorter than the message input.
In the terms of \figref{fig:ets:principle}
this means that the $\ad$-input of most compression function invocations will be constant,
meaning that preparing the compression function input requires just XORing the message with the key.
Our proposal is to switch,
in the terms of the example above,
the XORing step
from $(k\oplus m_1)\append\ad_3\append C_1$
to $(k\oplus\ad_3)\append m_1\append C_1$.
Note that if the associated data input is shorter than the message,
this means that no XORing is necessary once the associated data is fully processed
(as the first component of the concatenation remains invariant and can be precomputed).
Also this modification improves on the execution time of the overall algorithm.

\heading{Our Contributions}
We reconsider the encoding scheme of~PP~\cite{ESORICS:PijPoe20,EPRINT:PijPoe20b}
and suggest alterations as just described.
Our proposals improve the efficiency of the EtS construction of~PP,
rendering it truly practical.
We then formally show that our modified encoding scheme is provably secure.
We finally report on our implementation of the improved EtS scheme.
We will release the source code under a free software license by the time this article goes into print.

\heading{Related Work}
The Encrypt-to-Self primitive emerged only very recently and little directly related work seems to exist.
We already pointed to~\cite{ESORICS:PijPoe20,EPRINT:PijPoe20b}
as our main sources of inspiration,
and to the work of Dodis {\etal}~\cite{C:DGRW18}
for a quite similar solution for a different problem.
In~\cite{ESORICS:PijPoe20},
PP~identify the topics of memory encryption, password managers, and encryptment
(a notion related to instant messaging)
as related to EtS,
although there doesn't really seem to be a considerable overlap.
Finally,
we note that EtS-like tools have been proposed for the state management of TLS~1.3 variants~\cite{EC:AviGelJag19}.

\section{Preliminaries}
\label{sec:prelims}

\subsection{Notation}
\label{sec:preliminaries:notation}

We will keep notation consistent with \cite{ESORICS:PijPoe20} to allow for easy comparison.
We denote the natural numbers with $\NN=\{0,1,\ldots\}$ and we write $\NN^+=\{1,2,\ldots\}$ for the natural numbers excluding zero.
For the Boolean constants True and False we either write $\True$ and~$\False$, respectively, or $1$ and~$0$, respectively,
depending on the context.
An alphabet~$\Sigma$ is any finite set of symbols or characters.
We denote with~$\Sigma^n$ the set of strings of length~$n$
and with $\Sigma^{\leq n}$ the strings of length up to (and including)~$n$.
For our implementation we assume that $\abs\Sigma=256$,
i.e., that all strings are byte strings.
We denote string concatenation with~$\append$.
If $\mathit{var}$ is a string variable and $\mathit{exp}$ evaluates to a string, we write $\mathit{var}\getsappend\mathit{exp}$ shorthand for $\mathit{var}\gets\mathit{var}\append\mathit{exp}$.
Further, if $\mathit{exp}$ evaluates to a string,
we write $\mathit{var}\append\mathit{var}'\gets_n\mathit{exp}$ to denote splitting $\mathit{exp}$ such that we assign the first $n$ characters from $\mathit{exp}$ to $\mathit{var}$ and assign the remainder to $\mathit{var}'$.
When we do not need the remainder,
we write $\mathit{var}\gets_n\mathit{exp}$ to denote assigning the first $n$ characters from $\mathit{\exp}$ to $\mathit{var}$.
In pseudocode, we write $\$(S)$ for picking an element of~$S$ uniformly at random, for any finite set $S$.
Associative arrays implement the `dictionary' data structure:
Once the instruction $A[\cdot]\gets\mathit{exp}$ initialized all items of array~$A$ to the default value~$\mathit{exp}$,
with $A[\mathit{idx}]\gets\mathit{exp}$ and $\mathit{var}\gets A[\mathit{idx}]$
individual items indexed by expression~$\mathit{idx}$ can be updated or extracted.
Finally, we note all algorithms considered in this article may be randomized.

\subsection{Security Games}
\label{sec:preliminaries:games}

The security analyses in our article are conducted with respect to the security model formalized in \cite{ESORICS:PijPoe20}, we replicate the security model for completeness.
Security games are parameterized by an adversary,
and consist of a main game body plus zero or more oracle specifications.
The execution of a game starts with the main game body and terminates when a `Stop with~$\mathit{exp}$' instruction is reached,
where the value of expression~$\mathit{exp}$ is taken as the outcome of the game.
The adversary can query all oracles specified by the game, in any order and any number of times.
If the outcome of a game $\mathrm{G}$ is Boolean,
we write $\Pr[\mathrm{G}(\advA)]$ for the probability that an execution of~$\mathrm{G}$ with adversary~$\advA$ results in True,
where the probability is over the random coins drawn by the game and the adversary.
We define macros for specific combinations of game-ending instructions:
We write `Win' for `Stop with~$\True$' and `Lose' for `Stop with~$\False$',
and further
`Reward~$\mathit{cond}$' for `If $\mathit{cond}$: Win',
`Promise~$\mathit{cond}$' for `If $\lnot\mathit{cond}$: Win',
`Require~$\mathit{cond}$' for `If $\lnot\mathit{cond}$: Lose'.
These macros emphasize the specific semantics of game termination conditions.
For instance, a game may terminate with `Reward~$\mathit{cond}$' in cases where the adversary arranged for a situation
\unskip---indicated by $\mathit{cond}$ resolving to True---\ignorespaces
that should be awarded a win
(e.g., the crafting of a forgery in an authenticity game).

\subsection{Handling of Algorithm Failures}
\label{sec:preliminaries:allalgoscanfail}

We follow the clean notation of \cite{ESORICS:PijPoe20} where \emph{any} algorithm of a cryptographic scheme can fail.
Here, by failure it is meant that an algorithm doesn't generate output according to its syntax specification,
but instead outputs some kind of error indicator
(e.g., an AE decryption algorithm that rejects an unauthentic ciphertext or a randomized signature algorithm that doesn't have sufficiently many random bits to its disposal).
Instead of encoding this explicitly in syntactical constraints which would clutter the notation,
we assume that if an algorithm invokes another algorithm as a subroutine, and the latter fails,
then also the former immediately fails.
\unskip\footnote{\ignorespaces
This approach to handling algorithm failures is taken from~\cite{ToSC:PijPoe20} and borrows from how modern programming languages handle `exceptions',
where any algorithm can raise (or `throw') an exception,
and if the caller does not explicitly `catch' it,
the caller is terminated as well and the exception is passed on to the next level.
See \href{https://en.wikipedia.org/wiki/Exception_handling_syntax}{Wikipedia:\,\texttt{Exception\_handling\_syntax}} for exception handling syntaxes of many different programming languages.
\unskip}
The same is assumed for game oracles:
If an invoked scheme algorithm fails, then the oracle immediately aborts as well.
Further, we assume that the adversary learns about this failure,
i.e., the oracle will return the error indicator when it aborts.
Note that this implies that if a scheme's algorithms leak vital information through error messages, then the scheme will not be secure in these models.
(That is, they are particularly robust.)
We believe that this way to handle errors implicitly rather than explicitly contributes to obtaining definitions with clean and clear semantics.

\subsection{Memory Alignment}
\label{sec:memoryalignment}
For $n$ a power of~2,
we say an address of computer memory is $n$-byte aligned if it is a multiple of $n$~bytes.
We further say that a piece of data is $n$-byte aligned if the address of its first byte is $n$-byte aligned.
A modern CPU accesses a single (aligned) word in memory at a time.
Therefore, the CPU performs reads and writes to memory most efficiently when the data is aligned.
For example, on a 64-bit machine, 8~bytes of data can be read or written with a single memory access if the first byte lies on an 8-byte boundary.
However, if the data does not lie within one word in memory, the processor would need to access two memory words, which is considerably less efficient.
We modify the scheme algorithms proposed by \cite{ESORICS:PijPoe20} such that when they need to move around data, they exclusively do this for aligned addresses.
In practice, the preferred alignment value depends on the hardware used,
so for generality in this article we refer to it abstractly as the \underline{m}emory \underline{a}lignment \underline{v}alue~$\mav$.
(A typical value would be $\mav=8$.)

\section{Notions of Encrypt-to-Self}
\label{sec:encrypttoself}

In \cite{ESORICS:PijPoe20} the authors identified the novel encrypt-to-self (EtS) primitive,
which provides one-time secure encryption with authenticity guarantees that hold beyond key compromise.
In this section we replicate the syntax and security definitions of EtS.

EtS consists of an encryption and a decryption algorithm,
where the former translates a message to a \defineterm{binding tag} and a ciphertext,
and the latter recovers the message from the tag-ciphertext pair.
For versatility the two operations further support the processing of an associated-data input~\cite{CCS:Rogaway02}
which has to be identical for a successful decryption.

The task of the binding tag is to prevent forgery attacks:
A user that holds an authentic copy of the binding tag will never accept any ciphertext they did not generate themselves, even if all their secrets become public.
Note that while standard authenticated encryption (AE) does not provide this type of authentication,
the encrypt-then-hash construction suggested in \secref{sec:introduction} does.
In \secref{sec:construction} we provide a considerably more efficient construction that uses a hash function's compression function as its core building block.

\begin{definition}
Let $\adsp$ be an associated data space
and let $\msgsp$ be a message space.
An \defineterm{encrypt-to-self} {\/\normalfont\/ (EtS)} scheme
for $\adsp$ and $\msgsp$
consists of algorithms $\enc,\dec$,
a key space~$\keysp$, a binding-tag space~$\btsp$, and a ciphertext space~$\ctxtsp$.
The encryption algorithm~$\enc$ takes
a key $k\in\keysp$, associated data $ad\in\adsp$ and a message $m\in\msgsp$,
and returns a binding tag $\bt\in\btsp$ and a ciphertext $c\in\ctxtsp$.
The decryption algorithm~$\dec$ takes
a key $k\in\keysp$, a binding tag $\bt\in\btsp$, associated data $\ad\in\adsp$ and a ciphertext $c\in\ctxtsp$,
and returns a message $m\in\msgsp$.
A shortcut notation for this API is as follows:
\[
  \keysp\times\adsp\times\msgsp\to\enc\to\btsp\times\ctxtsp
\]
and
\[
  \keysp\times\btsp\times\adsp\times\ctxtsp\to\dec\to\msgsp
  \enspace.
\]
\end{definition}

\heading{Correctness and Security}
We require of an EtS scheme that if a message~$m$ is processed to a tag-ciphertext pair with associated data~$\ad$,
and a message~$m'$ is recovered from this pair using the same associated data~$\ad$,
then the messages $m,m'$ shall be identical.
This is formalized via the $\SAFE$ game in \figref{fig:encrypttoself:games}.
\unskip\footnote{\ignorespaces
  The SAFETY term borrows from the Distributed Computing community.
  SAFETY should not be confused with a notion of security.
  Informally, safety properties require that ``bad things'' will not happen.
  (In the case of encryption, it would be a bad thing if the decryption of an encryption yielded the wrong message.)
  For an initial overview we refer to
  \href{https://en.wikipedia.org/wiki/Safety_property}{Wikipedia:\,\texttt{Safety\_property}}
  and for the details to~\cite{DC:AlpSch87}.
\unskip}
In particular, observe that if the adversary queries $\Odec(\ad,c)$
(for the authentic $\ad$ and~$c$ that it receives in line~02)
and the $\dec$ procedure produces output~$m'$, the game promises that $m'=m$
(lines~05,06).
Recall from \secref{sec:preliminaries:games} that this means the game stops with output~$\True$ if $m'\neq m$.
Intuitively, the scheme is \defineterm{safe}
if we can rely on $m'=m$, that is,
if the maximum advantage
$\Adv^\safe(\advA)\coloneqq\max_{\ad\in\adsp,m\in\msgsp}\Pr[\SAFE(\ad,m,\advA)]$
that can be attained by realistic adversaries~$\advA$ is negligible.
The scheme is perfectly safe if $\Adv^\safe(\advA)=0$ for all~$\advA$.
We remark that the universal quantification over all pairs $(\ad,m)$ makes the advantage definition particularly robust.

\begin{figure*}[t]
  \centering
  {\scalebox{1.045}{\includegraphics{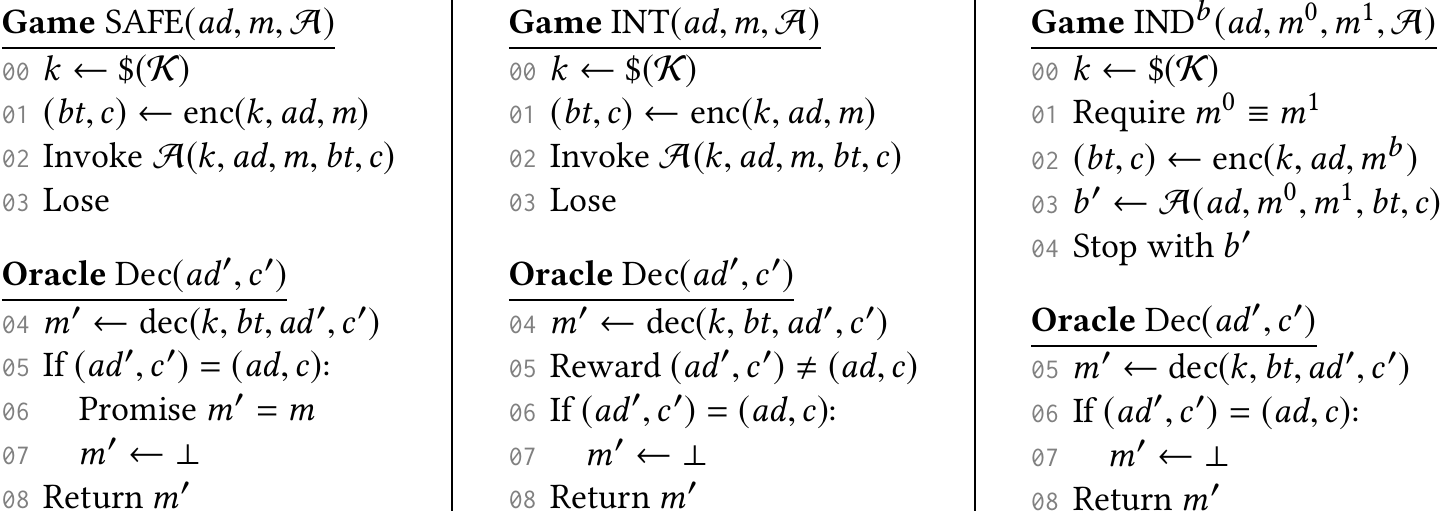}}}
  \caption[Games for EtS]{\ignorespaces
    Games for EtS.
    For the values~$\ad',c'$ provided by the adversary we require that $\ad'\in\adsp,c'\in\ctxtsp$.
    Assuming $\bot\notin\msgsp$,
    we encode suppressed messages with~$\bot$.
    For the meaning of instructions Stop with, Lose, Promise, Reward, and Require see \secref{sec:preliminaries:games}.
  \unskip}
  \label{fig:encrypttoself:games}
\end{figure*}

The security notions demand that the integrity of ciphertexts be protected (INT-CTXT),
and that encryptions be indistinguishable in the presence of chosen-ciphertext attacks (IND-CCA).
The notions are formalized via the $\INT$ and $\IND^0,\IND^1$ games in \figref{fig:encrypttoself:games},
where the latter two depend on some equivalence relation
${\equiv}\subseteq\msgsp\times\msgsp$ on the message space.
\unskip\footnote{\ignorespaces
  \label{fn:ind-uses-equiv}\ignorespaces
  We use relation~$\equiv$ (in line~01 of~$\IND^b$)
  to deal with certain restrictions that practical EtS schemes may feature.
  Concretely, our construction does not take effort to hide the length of encrypted messages,
  implying that indistinguishability is necessarily limited to same-length messages.
  In the formalization this technical restriction is expressed by defining~$\equiv$
  such that $m^0\equiv m^1 \Leftrightarrow \abs{m^0}=\abs{m^1}$.
\unskip}
For consistency,
in line~07
in each of the three games
we suppress the message
if the adversary queries $\Odec(\ad,c)$.
This is crucial in the $\IND^b$ games,
as otherwise the adversary would trivially learn which message was encrypted,
but does not harm in the other games as the adversary already knows~$m$.
Recall from \secref{sec:preliminaries:allalgoscanfail} that all algorithms can fail, and if they do, then the oracles immediately abort.
This property is crucial in the $\INT$ game where the $\dec$ algorithm must fail for unauthentic input such that the oracle immediately aborts.
Otherwise, the game will reward the adversary, that is the game stops with~$\True$
(line~05).
We say that a scheme provides \defineterm{integrity} if the maximum advantage
$\Adv^\int(\advA)\coloneqq\max_{\ad\in\adsp,m\in\msgsp}\Pr[\INT(\ad,m,\advA)]$
that can be attained by realistic adversaries~$\advA$ is negligible,
and that it provides \defineterm{indistinguishability} if the same holds for the advantage
\begin{align*}
    &\Adv^\ind(\advA)\coloneqq\\
    &\max_{\substack{\ad\in\adsp \\ m^0,m^1\in\msgsp}}
    \abs{\Pr[\IND^1(\ad,m^0,m^1,\advA)]-\Pr[\IND^0(\ad,m^0,m^1,\advA)]}.
\end{align*}

\section{New Encrypt-to-Self Construction}
\label{sec:construction}
We mentioned in \secref{sec:introduction} that a generic construction of EtS can be realized by combining standard symmetric encryption with a cryptographic hash function:
one encrypts the message and computes the binding tag as the hash of the ciphertext.
In \cite{ESORICS:PijPoe20} the authors provide a more efficient construction that builds on the compression function of a Merkle--Damgård hash function.
To be more precise, the construction uses a tweakable compression function with tweak space $T=\bits$,
i.e., the domain of the compression function is extended by one bit.
We provide a general definition below.

\begin{definition}\label{def:CF}
    For $\Sigma$ an alphabet, $c,d\in\NN^+$ with $c\leq d$, and a tweak space~$T$,
    we define a \defineterm{tweakable compression function}
    to be a function $F\colon\Sigma^d\times T\times\Sigma^c\to\Sigma^c$
    that takes as input a block $B\in\Sigma^d$ from the data domain,
    a tweak $t\in T$ from the tweak space,
    and a string $C\in\Sigma^c$ from the chain domain,
    and outputs a string $C'\in\Sigma^c$ in the chain domain.
\end{definition}

We will write $F_t(B,C)$ as shorthand notation for $F(B,t,C)$.
For practical tweakable compression functions the memory alignment value~$\mav$
(see \secref{sec:memoryalignment})
will divide both $c$ and~$d$.
When constructing an EtS scheme from~$F$,
because the compression function only takes fixed-size input,
we need to map the $(\ad,m)$ input to a series of block--tweak pairs $(B,t)$.
We will refer to this mapping as the input \defineterm{encoding}.

The approach taken by \cite{ESORICS:PijPoe20} fixes the encoding independently of the encryption engine.
It is precisely this modular approach that allows us to easily replace the encoding function with our optimized version.
We present our new encoding function in \secref{sec:padding} and provide the encryption engine in \secref{sec:encryption-construction} for completeness.
Together they form an efficient construction of EtS.

We first convey a rough overview of the EtS construction.
In \figref{fig:encrypttoself:example}
we consider an example with block size~$d$ double the chaining value size~$c$.
We assume that key~$k$ is padded to size~$d$.
The first block $B_1$ only contains associated data and we XOR $B_1$ with the key $k$ before we feed it into the compression function.
From the second block, we start processing message data.
We fill the first half of the block with associated data~$\ad_3$ and the second half with message data~$m_1$, and XOR with the key.
We also XOR $m_1$ with the current chaining value~$C_1$, to generate a partial ciphertext~$\ct_1$.
The same happens in the third block and we append~$\ct_2$ to the ciphertext.
If there is associated data left after processing all message data we can load the entire block with associated data, which occurs in the fourth block.
Note, we no longer need to XOR the key into the block after we have processed all message data, because at this point the input to the compression function will already be independent of the message $m$.
After processing all blocks, we XOR an offset $\omega\in\{\omega_0,\omega_1\}$ with the chaining value, where $\omega_0,\omega_1$ are two distinct constants.
The binding tag will be (a truncation of) the last chaining value~$C^*$.
\unskip\footnote{\ignorespaces
  It will be crucial to fix $\omega_0,\omega_1$ such that they are distinct also after truncation.
}
Note that the task of the encoding is not only to partition $\ad$ and~$m$ into blocks $B_1,B_2,\ldots$ as described,
but also to derive tweak values $t_1,t_2,\ldots$ and the choice of the final offset~$\omega$
in such a way that the overall encoding is injective.

\begin{figure*}[t]
  \centering
  \includegraphics{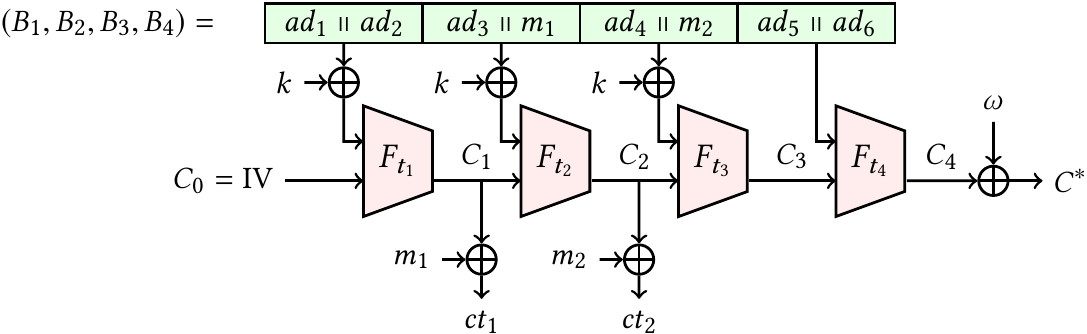}
  \caption[Example EtS operation]{\ignorespaces
      Example for $\enc(k,\ad,m)$ where $d=2c$ and $\ad=\ad_1\append\ldots\append\ad_6$ and $m=m_1\append m_2$ with $\abs{\ad}=6c$ and $\abs{m}=2c$.
      For clarity we have made the blocks~$B_i$, as they are output by the encoding function, explicit.
  \unskip}
  \label{fig:encrypttoself:example}
\end{figure*}

\subsection{Old and New Message Block Encoding}
\label{sec:padding}
We turn to the technical component of our EtS construction that encodes the $(\ad,m)$ input into a series of output pairs $(B,t)$ and the final offset value~$\omega$.
For authenticity we require that the encoding is injective.
For efficiency we require
that the encoding is online
(i.e., the input is read only once, left-to-right, and with small state),
that the number of output pairs is as small as possible,
and that the encoding preserves memory alignment
(see \secref{sec:memoryalignment}).
Syntactically, for the outputs we require that all $B\in\Sigma^d$, all $t\in T$, and $\omega\in\Omega$,
where quantities $c,d$ are those of the employed compression function, $T=\bits$, and $\Omega\subseteq\Sigma^c$ is any two-element set.
In our implementations we use $\Omega=\{\omega_0,\omega_1\}$ where $\omega_0=\texttt{0x00}^c$ and $\omega_1=\texttt{0xa5}^c$.
Pijnenburg {\etal} \cite{ESORICS:PijPoe20} describe the task as follows:

\medskip\noindent\textbf{Task.}
Assume $\abs\Sigma=256$ and $\adsp=\msgsp=\Sigma^*$ and $T=\bits$ and $\abs\Omega=2$.
For $c,d\in\NN^+$, $c<d$, find an injective encoding function $\encode\colon\adsp\times\msgsp\to(\Sigma^d\times T)^*\times\Omega$ that takes as input two finite strings and outputs a finite sequence of pairs $(B,t)\in\Sigma^d\times T$ and an offset $\omega\in\Omega$.
\medskip

As we already alluded to in \secref{sec:introduction} we introduce two improvements to the encoding scheme of~\cite{ESORICS:PijPoe20}.
These are not related to any aspect of security,
but rather to efficiency.
We note that implementing the encoding presented in \cite{ESORICS:PijPoe20} will require shifting every message byte in computer memory by one position.
As described in \secref{sec:memoryalignment} this shift-by-one operation is considerably more expensive than one might expect at first.
We alleviate this efficiency bottleneck by changing the padding to one which does not require shifting by a single byte (yet remains injective).

We will now present a detailed specification of our encoding (and decoding) function.
The pseudocode can be found in \figref{fig:encrypttoself:encoencr}, but we present it here in text.
Note the construction does not use the decoding function, but we provide it anyway to show that the encoding function is indeed injective.
Roughly, we encode as follows.
We fill the first block with associated data and for any subsequent block we load the associated data in the first part of the block and the message in the second part of the block.
When we have processed all the message data, we load the full block with $\ad$ again.
Clearly, we need to pad $\ad$ if it runs out before we have processed all message data.
We do this by appending a special termination symbol $\diamond\in\Sigma$ to~$\ad$ and then appending null bytes as needed.
Similarly, we need to pad the message if the message length is not a multiple of $c$.
Naturally, one might want to pad the message to a multiple of $c$.
However, this is suboptimal:
Consider the scenario where there are $d-c+1$ bytes remaining to be processed of associated data and 1 byte of message data.
In principle, message and associated data would fit into a single block,
but this would not be the case any longer if the message is padded to size~$c$.
On the other hand, for efficiency reasons we do not want to misalign all our remaining associated data.
If we do not pad at all, when we process the next $d$ bytes of associated data, we can only fit $d-1$ bytes in the block and have to put 1 byte into the next block.
Therefore, we pad~$m$ up to a multiple of the memory alignment value $\mav$.
To be precise, we pad message with null bytes until reaching a multiple of~$\mav$.
We replace the final (null) byte with the message length~$\abs{m}$;
this will uniquely determine where $m$ stops and the padding begins.
This restricts us to $c\leq256$ bytes such that $\abs{m}$ always can be encoded into a single byte.
As far as we are aware, any current practical compression function satisfies this requirement.

In \figref{fig:encoding:example},
for the artificially small case with $c=1$ and $d=2$
we provide four examples of what the blocks would look like for different inputs.
The top row shows the encoding of~8 bytes of associated data and an empty message.
The second row shows the encoding of empty associated data and 3~bytes of message data.
The third row shows the encoding of~6 bytes of associated data and 2~bytes of message data.
The final row shows the encoding of~3 bytes of associated data and 3~bytes of message data.

\begin{figure}[htb]
  \centering
  \includegraphics{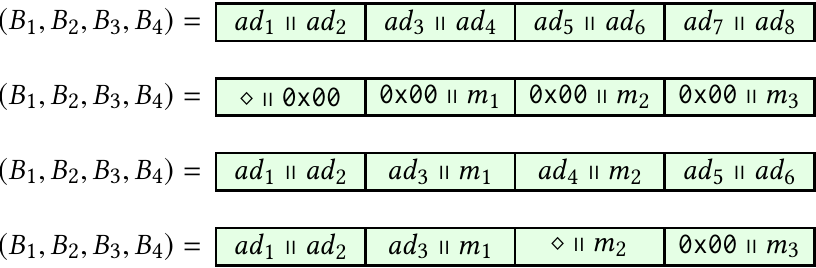}
  \caption[Encoding Example]{\ignorespaces
    Example encodings for the case $c=1$ and $d=2$.
  \unskip}
  \label{fig:encoding:example}
\end{figure}

We have two ambiguities remaining.
(1)~How to tell whether $\ad$ was padded or not?
Consider the first row in \figref{fig:encoding:example}.
What distinguishes the case $\ad=\ad_1\append\ldots\append\ad_7$ from $\ad=\ad_1\append\ldots\append\ad_7\append\ad_8$ with $\ad_8=\diamond$?
A similar question applies to the message.
(2)~How to tell whether a block contains message data or not?
Compare e.g., the first row with the third row.
This is where the tweaks come into play.

First of all, we tweak the first block if and only if the message is empty.
This fully separates the authentication-only case from the case where we have message input.

Next, if the message is non-empty, we use the tweaks to indicate when we switch from processing message data to $\ad$-only:
we tweak when we have consumed all of $m$, but still have $\ad$ left.
Note the first block never processes message data,
so the earliest block this may tweak is the second block and hence this rule does not interfere with the first rule.
Furthermore, observe this rule never tweaks the final block, as by definition of being the final block, we do not have any associated data left to process.

Next, we need to distinguish between the cases whether $m$ is padded or not.
In fact,
as the empty message was already taken care of,
we need to do this only if $m$ is at least one byte in size.
As in this case the final block does not coincide with the first block,
we can exploit that its tweak is still unused;
we correspondingly tweak the final block if and only if $m$ is padded.
Obviously, this does not interfere with the previous rules.

Finally, we need to decide whether $\ad$ was padded or not.
We do not want to enforce a policy of `always pad', as this could result in an extra block and hence an extra compression function invocation.
Instead, we use our offset output.
We set the offset~$\omega$ to~$\omega_1$ if $\ad$ was padded;
otherwise we set it to~$\omega_0$.

This completes our description of the encoding function.
The decoding function is a technical exercise carefully unwinding the steps taken in the encoding function, which we perform in \figref{fig:encrypttoself:encoencr}.
We obtain that for all $m\in\msgsp,\ad\in\adsp$ we have $\decode(\encode(\ad,m))=(\ad,m)$.
It immediately follows that our encoding function is injective.
For readability we have implemented the core functionality of the encoding in a coroutine called $\NEXT$, rather than a subroutine.
Instead of generating the entire sequence of $(B,t)$ pairs and returning the result, it will `Yield' one pair and suspend its execution.
The next time it is called (e.g., the next step in a for loop), it will resume execution from where it called `Yield', instead of at the beginning of the function, with all of its state intact.
The $\encode$ procedure is a simple wrapper that runs the $\NEXT$ procedure and collects its output, but our authenticated encryption engine described in \secref{sec:encryption-construction} will call the $\NEXT$ procedure directly.

\begin{figure*}[ht]
  \centering
  {\scalebox{1.045}{\includegraphics{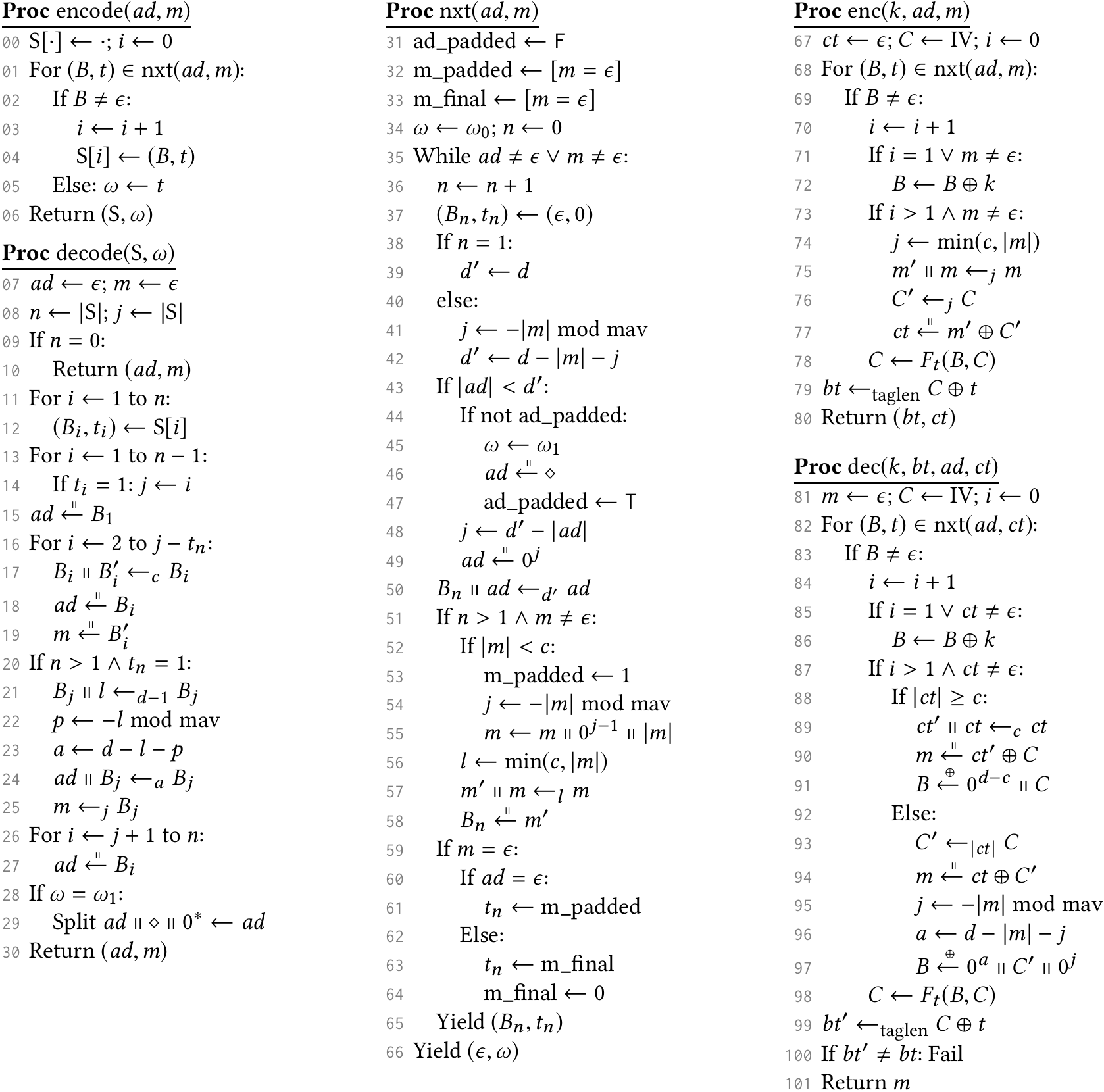}}}
  \caption[EtS: encoding and encryption]{\ignorespaces
    EtS construction: encoder, decoder, encryptor, and decryptor.
    (Procedure $\NEXT$ is a coroutine for $\encode$, $\enc$, and $\dec$, see text.)
    Using global constants $\mav$, $c$, $d$, $\taglen$, and $\IV$.
  \unskip}
  \label{fig:encrypttoself:encoencr}
\end{figure*}

\subsection{Encryption Engine}
\label{sec:encryption-construction}
For completeness, in this section we describe the encryption engine presented in~\cite{ESORICS:PijPoe20}.
Thanks to their modular approach, we can combine it with our encoding function without having to make any modifications:
They only assume that the associated data and message are present in encoded format, i.e.,
as a sequence of pairs $(B,t)$,
where $B\in\Sigma^d$ is a block and $t\in\bits$ is a tweak,
and additionally an offset $\omega\in\{\omega_0,\omega_1\}$.

We specify the encryption and decryption algorithms in \figref{fig:encrypttoself:encoencr} and assume they are provided with a key of length~$d$.
This is to ensure the $\oplus$ operation is well defined, without cluttering the notation.
One can consider the key padded with null bytes to length~$d$.
In practice, our implementation in C code will XOR the key with the first $\abs{k}$ bytes of a block $B$.
As described in \secref{sec:introduction}, the associated data string is often shorter than the message input.
By our encoding function defined in \secref{sec:padding}, this means that the $\ad$-input of most compression function invocations will be constant.
Thus, the first part of a block can be precomputed and no XORing is necessary any more.
Hence improving on the overall execution time compared to \cite{ESORICS:PijPoe20}.

We now discuss the encryption and decryption procedure presented in \cite{ESORICS:PijPoe20} in more detail.
As illustrated in \figref{fig:encrypttoself:example},
the main idea is to XOR the key with all blocks that are involved with message processing.
For the skeleton of the construction, we initialize the chaining value $C$ to $\IV$ and loop through the sequence of pairs $(B,t)$ output by the encoding function, each iteration updating the chaining value ${C\gets F_t(B,C)}$.
Let us examine each iteration of the $\enc$ procedure in more detail.
If the block is empty
(line~69),
we are in the final iteration and do not do anything.
Otherwise, we check if we are in the first iteration or if we have message data left
(line~71).
In this case we XOR the key into the block
(line~72).
This ensures we start with an unknown input block and that subsequent inputs are statistically independent of the message block.
If we only have $\ad$ remaining we can use the block directly as input to the compression function.
If we have message data left we will encrypt it starting from the second block
(line~73).
To encrypt, we take a chunk of the message, XOR it with the chaining value of equal size and append the result to the ciphertext
(lines~74--77).
We only start encrypting from the second iteration as the first chaining value is public.
Finally, we call the compression function $F_t$ to update our chaining value
(line~78).
Once we have finished the loop, the last pair $(B,t)$ equals $(\es,\omega)$ by definition.
So we XOR the offset~$\omega$ with the chaining value $C$ and truncate the result to obtain the binding tag
(line~79).
We return the binding tag along with the ciphertext.

The $\dec$ procedure is similar to the $\enc$ procedure but needs to be slightly adapted.
Informally, the $\NEXT$ procedure now outputs a block $B=(\ad\append\ct)$
(line~82)
instead of $B=(\ad\append m)$
(line~68).
Hence, we XOR with the chaining variable
(line~91,97)
such that the block becomes $B=(\ad\append m)$ and the compression function call takes equal input compared to the $\enc$ procedure.
The case distinction handles the slightly different positioning of ciphertext in the blocks.
Finally, there obviously is a check if the computed binding tag is equal to the stored binding tag
(line~100).

\subsection{Security Analysis}
In order to prove security, we need further assumptions on our compression function than the standard assumption of preimage resistance and collision resistance.
For example, we need $F$ to be difference unpredictable.
Roughly, this notion says it is hard to find a pair $(x,y)$ such that $F(x)=F(y)\oplus z$ for a given difference $z$.
Moreover, we truncate the binding tag, so actually it should be hard to find a tuple such that this equation holds for the first $\abs{\bt}$ bits.
We note collision resistance of~$F$ does not imply collision resistance of a truncated version of $F$ \cite{C:BihChe04}.
However, such assumptions could be justified when one considers the compression function as a random function.
Hence, instead of several ad~hoc assumptions, we prove our construction secure directly in the random oracle model.

As described in \cite{ESORICS:PijPoe20},
the SHA2 compression function can be tweaked by modifying the chaining value depending on the tweak.
Let $F$ be the tweakable compression function in \figref{fig:encrypttoself:encoencr}.
We write $F'$ for the SHA2 compression function that will take as input the block and the (modified) chaining value.
Let $H\colon\Sigma^d\times\Sigma^c\to\Sigma^c$ be a random oracle.
In the security analysis of the SHA2 construction, we will substitute $H$ for $F'$ in our construction.

We remark the BLAKE2b compression function is a tweakable compression function and it can be substituted directly for a random oracle with an extended input space.
That is, a random oracle $\bar{H}\colon\Sigma^d\times\bits\times\Sigma^c\to\Sigma^c$.
Hence, in the security analysis of the BLAKE2b construction, we will substitute $\bar{H}$ for $F$ in our construction.

We remark that we cannot treat our tweaked SHA2 compression function $F$ in this way as it would be distinguishable from random oracle $\bar{H}$.
To see this, observe that querying $F$ on the unmodified chaining variable with tweak $t=1$ yields the same result as querying $F$ on the modified chaining variable with $t=0$.
In the random oracle $\bar{H}$ these two queries are completely independent.

Both for tweakable and non-tweakable compression functions
our EtS construction from \figref{fig:encrypttoself:encoencr} provides
integrity
and indistinguishability
in the random oracle model,
assuming sufficiently large tag and key lengths.
We refer the reader to Proposition~\ref{thm:ets:int} for the integrity, and Proposition~\ref{thm:ets:ind} for the indistinguishability, of the instantiation with a non-tweakable compression function.
For the instantiation with a tweakable compression function we refer to Proposition~\ref{thm:ets:int2} and Proposition~\ref{thm:ets:ind2} for integrity and indistinguishability, respectively.

The security proofs in \cite{ESORICS:PijPoe20} only require injectivity from the encoding function.
We have demonstrated in \secref{sec:padding} that our modified encoding function remains injective, so the security proofs still apply.
However, \cite{ESORICS:PijPoe20} omits the security proofs for the tweakable instantiation, so we expand upon them here for the interested reader.
We will now first discuss the non-tweakable compression function instantiation and subsequently the tweakable compression function instantiation.

Let $H\colon\Sigma^d\times\Sigma^c\to\Sigma^c$ be a random oracle.
Recall we consider an instantiation with a standard (non-tweakable) compression function $F'$ transformed into a tweakable compression function $F$ by modifying the chaining value.
We replace $F'$, used internally by $F$, with random oracle $H$.

\begin{proposition}
    \label{thm:ets:int}
    Let $\pi$ be the construction given in \figref{fig:encrypttoself:encoencr},
    $H$ a random oracle replacing the compression function,
    $\advA$ an adversary,
    \allowbreak
    $\Adv_\pi^\int(\advA)$ the advantage that $\advA$ has against $\pi$ in the integrity game of \figref{fig:encrypttoself:games}
    and $q$ the number of random oracle queries (either directly or indirectly via $\Odec$).
    We have,
    \begin{align*}
        \Adv_\pi^\int(\advA)\leq q^2\cdot2^{-c}+q\cdot2^{-\abs{\bt}}.
    \end{align*}
\end{proposition}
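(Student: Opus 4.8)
The plan is to argue integrity directly in the random oracle model, separating two unrelated avenues of attack: forcing a collision among the chaining values, and guessing the truncated output of a compression-function evaluation that is sampled only after the target tag has been fixed. Replacing the internal compression function~$F'$ by the random oracle~$H$, I would maintain a lazily sampled table for~$H$ and let~$q$ bound the total number of $H$-evaluations performed in the game --- those requested by~$\advA$, those triggered by~$\Odec$, and the ones performed by the challenge encryption $\enc(k,\ad,m)$. Recall that $\advA$ wins precisely when some call $\Odec(\ad',\ct')$ with $(\ad',\ct')\neq(\ad,\ct)$ does not fail, i.e.\ the decoder reconstructs a valid encoding $\encode(\ad',m')$ and the recomputed chaining value, XORed with the reconstructed offset and then truncated, equals the stored tag~$\bt$. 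The argument does not use secrecy of~$k$ --- the key XOR is irrelevant for integrity --- so one may even grant~$\advA$ knowledge of~$k$.

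First I would introduce a bad event~$\mathsf{Coll}$ capturing any coincidence among the chaining values that arise in the game: $\IV$, the (at most~$q$) raw outputs of~$H$, and their tweak-modified variants~$\mathrm{mod}_t(\cdot)$, including the case that~$\IV$ equals one of them. A standard birthday/union bound gives $\Pr[\mathsf{Coll}]\le q^2\cdot2^{-c}$. On~$\lnot\mathsf{Coll}$ the computation becomes a tree: the directed graph with an edge $V\to H(B\oplus K,\mathrm{mod}_t(V))$ for every evaluated compression step, restricted to the vertices reachable from~$\IV$, is a tree rooted at~$\IV$, so every reachable chaining value has a unique history --- a unique sequence of block/tweak pairs leading to it. In particular, the chain traced by the challenge encryption and the chain traced by any $\Odec(\ad',\ct')$ call are both downward paths from the root~$\IV$, and two such paths ending at the same vertex are identical: same length, same blocks, same tweaks (here one uses that $\mathrm{mod}_0\neq\mathrm{mod}_1$ pointwise, so a tweak change already alters the $H$-input).

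Now work on~$\lnot\mathsf{Coll}$ and take any non-failing call $\Odec(\ad',\ct')$ with $(\ad',\ct')\neq(\ad,\ct)$; let~$e$ denote its last $H$-evaluation and $C'=H(e)$ the resulting final chaining value. Since decoding succeeds, $\ct'$ is exactly the ciphertext that $\enc(k,\ad',m')$ would produce, so $(\ad',\ct')\neq(\ad,\ct)$ forces $(\ad',m')\neq(\ad,m)$; and because the sequence of $H$-inputs of a successful decryption recovers~$(\ad',\ct')$ (an encoding-bookkeeping fact, see below), this call's chain can be neither the full honest chain (that would give $(\ad',\ct')=(\ad,\ct)$) nor a strict prefix of it (by injectivity of~$\encode$ a strict prefix is not a valid encoding, so the decoder would fail). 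Hence~$e$ is not one of the $H$-inputs of the challenge encryption, so it is first sampled only after the target~$\bt$ has been fixed; its output~$C'$ is then uniform and independent of~$\bt$ and of the reconstructed offset~$\omega'$, whence $\Pr[\mathrm{trunc}(C'\oplus\omega')=\bt]=2^{-\abs{\bt}}$. Since, on~$\lnot\mathsf{Coll}$, distinct successful forgeries end at distinct vertices and thus have distinct last $H$-inputs, a union bound over the $\le q$ evaluations bounds the success probability on~$\lnot\mathsf{Coll}$ by $q\cdot2^{-\abs{\bt}}$; adding $\Pr[\mathsf{Coll}]$ gives $\Adv_\pi^\int(\advA)\le q^2\cdot2^{-c}+q\cdot2^{-\abs{\bt}}$.

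I expect the conceptual core above to be routine and the genuine work to be the encoding bookkeeping. One has to verify: (i)~a strict prefix of an output of~$\encode$ is never itself an output of~$\encode$, so that the ``forgery reuses an honest intermediate chaining value'' case really is vacuous; (ii)~the decoder reconstructs the block structure \emph{and} the offset deterministically from $(\abs{\ad'},\abs{\ct'})$ --- indeed the padding of~$\ad'$ can be read off the last block --- so that a non-failing call has a well-defined transcript whose $H$-input sequence recovers~$(\ad',\ct')$ and whose offset~$\omega'$ is pinned down by~$e$; and (iii)~neither the key XOR nor the tweak modification ever accidentally identifies two $H$-inputs that ought to be distinct, which is what permits treating the whole computation graph as a tree under the single $O(q^2\cdot2^{-c})$ collision term. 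All three points rest only on injectivity of~$\encode$ and on $\decode\circ\encode=\mathrm{id}$, already established in \secref{sec:padding}; the birthday bound and the output-guessing step are themselves textbook.
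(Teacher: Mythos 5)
Your overall strategy---a single collision event worth $q^2\cdot2^{-c}$, then a per-evaluation $2^{-\abs{\bt}}$ tag-guessing bound on the no-collision event---is the same decomposition the paper uses (the paper merely organizes it as a backward trace from the tag rather than a forward tree from $\IV$), and the final bound is correct. But there is a genuine gap: your claim~(i), that a strict prefix of an output of $\encode$ is never itself an output of $\encode$, is false, and in any case it would not follow from injectivity (injectivity and prefix-freeness are independent properties of a code). Concretely, take $d=2c$ with $\mav\mid c$; let $\ad'$ have length $2d-c$, so it exactly fills block~1 and the $\ad$-part of block~2 with no padding, and let $m'$ have length $c-1$, so $m'$ is padded and block~2---the final block---carries $m'\append\langle c-1\rangle$ (length byte appended) and is tweaked by the ``$m$ padded'' rule. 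Now let $\ad$ strictly extend $\ad'$ and set $m=m'\append\langle c-1\rangle$ of length exactly $c$ (unpadded): blocks~1 and~2 of $\encode(\ad,m)$ coincide byte-for-byte with those of $\encode(\ad',m')$, and block~2 is again tweaked, this time by the ``switched to $\ad$-only'' rule. Hence $\encode(\ad',m')$ is a strict prefix of $\encode(\ad,m)$, and the corresponding forgery is perfectly submittable: $\ad'$ is a prefix of $\ad$ and $\ct'$ is just the first $c-1$ bytes of $\ct$.

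Why this matters for your argument: in this case the last $H$-input $e$ of the forgery \emph{is} one of the challenge encryption's $H$-inputs, so $C'=H(e)$ equals an intermediate honest chaining value $C_j$ with $j<n$, sampled during the challenge encryption rather than afterwards---your ``$C'$ is first sampled only after $\bt$ has been fixed, hence uniform and independent of $\bt$'' step collapses. The case is not fatal to the bound: on $\lnot\mathsf{Coll}$ the inputs producing $C_j$ and $C_n$ are distinct, so $C_j$ and $C_n$ are independent uniform values and $\Pr[\mathrm{trunc}(C_j\oplus\omega')=\mathrm{trunc}(C_n\oplus\omega)]=2^{-\abs{\bt}}$ still holds, with the randomness now supplied by $C_n$ instead of by $C'$; and these vertices are already counted among your $q$ evaluations, so the union bound survives. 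But you must treat this case explicitly rather than declare it vacuous. (For comparison, the paper never asserts prefix-freeness: its backward trace puts this situation into the ``inputs to the final compression call differ, hence a partial second preimage'' bucket, absorbing it at the same $q\cdot2^{-\abs{\bt}}$ cost.)
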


\begin{proof}
    For all $\ad\in\adsp,m\in\msgsp$
    we will show that
    \begin{align*}
        \Pr[\INT(\ad,m,\advA)]\leq q^2\cdot2^{-c}+q\cdot2^{-\abs{\bt}}.
    \end{align*}
    Let $\ad\in\adsp$ be associated data and let $m\in\msgsp$ be a message.
    The game $\INT(\ad,m,\advA)$ samples a uniformly random key $k\in\keysp$ and computes $(\bt,c)=\enc(k,\ad,m)$.
    $\advA$ wins the $\INT$ game if it provides a pair $(\ad',c')\neq(\ad,c)$ such that $\dec(k,\bt,ad',c')$ succeeds, which only happens if $\bt'=\bt$.
    Recall the encoding function outputs a sequence $S$ of $(B,t)$ pairs and an offset $\omega$.
    Because the encoding function is injective we must have $S'\neq S$ or $\omega'\neq\omega$.
    Let us first assume $S'=S$.
    Let $C_n$ denote the final chaining variable.
    Because the sequences are equal, we will arrive at $C'_n=C_n$.
    We must have $\omega'\neq\omega$, but clearly $C_n\oplus\omega_0$ is not equal to $C_n\oplus\omega_1$ (even after truncation),
    that is, $\bt'\neq\bt$.
    We have a contradiction and conclude $S'\neq S$.

    For the case $S'\neq S$, let us now assume the subcase $\omega'\neq\omega$.
    The first $\abs{\bt}$ bits of $C'_{n'}$ must equal the first $\abs{\bt}$ bits of $C_n\oplus\omega\oplus\omega'$, i.e., $\advA$ must find a partial preimage.
    Because $H$ is a random oracle, $\advA$ would succeed with probability at most $q\cdot2^{-\abs{\bt}}$, where $q$ is the number of queries.
    In the other subcase we have $\omega'=\omega$.
    Then the first $\abs{\bt}$ bits of $C'_{n'}$ must equal the first $\abs{\bt}$ bits of $C_n$, i.e.,
    the first $\abs{\bt}$ bits of $H(B'_{n'},\hat{C}'_{n'-1})$ must equal the first $\abs{\bt}$ bits of $H(B_n,\hat{C}_{n-1})$,
    where $\hat{C}'_{n'-1},\hat{C}_{n-1}$ are the chaining values $C'_{n'-1},C_{n-1}$ after applying tweaks $t'_{n'},t_n$, respectively.
    If the inputs are not equal, $\advA$ has found a partial second preimage.
    Since $H$ is a random oracle, $\advA$ would succeed with probability at most $q\cdot2^{-\abs{\bt}}$, where $q$ is the number of oracle queries.
    However, if the inputs are equal we know $\hat{C}'_{n'-1}=\hat{C}_{n-1}$.
    Let us write $\hat{C}'_{n'-1}=C'_{n'-1}\oplus\tau'$ and $\hat{C}_{n-1}=C_{n-1}\oplus\tau$.
    We obtain $C'_{n'-1}=C_{n-1}\oplus\tau\oplus\tau'$.
    Thus, either $\advA$ has found a collision or $C'_{n'-1}=C_{n-1}$.
    We can repeat the argument to reason about $C'_{n-2},C_{n-2}$, etc.
    By a standard birthday argument we can bound the probability of a collision by $q^2\cdot2^{-c}$.

    If we eventually conclude $C'_{n'-\delta}=C_{n-\delta}=\IV$, we know one of the sequences is longer, i.e., $n'-\delta>0$ or $n-\delta>0$.
    Otherwise the sequences would be equal, which is excluded by the injectivity of the encoding function.
    In the case $n-\delta>0$, there has been a collision in the hash function, we have already bounded this probability above.
    Thus, let us assume $n'-\delta>0$.
    We have $H(B_{n'-\delta},\hat{C}_{n'-\delta-1})=\IV$.
    Thus $\advA$ has found a preimage of $\IV$.
    Because $H$ is a random oracle, $\advA$ would succeed with probability at most~$q\cdot2^{-c}$.
\end{proof}

\begin{proposition}
    \label{thm:ets:ind}
    Let $\pi$ be the construction given in \figref{fig:encrypttoself:encoencr},
    $H$ a random oracle replacing the compression function,
    $\advA$ an adversary,
    \allowbreak
    $\Adv_\pi^\ind(\advA)$ the advantage that $\advA$ has against $\pi$ in the indistinguishability games of \figref{fig:encrypttoself:games}
    and $q$ the number of random oracle queries (either directly or indirectly via $\Odec$).
    We have,
    \begin{align*}
        \Adv_\pi^\ind(\advA)\leq q^2\cdot2^{-c}+q\cdot2^{-\abs{k}}+\Adv_\pi^\int(\advA).
    \end{align*}
\end{proposition}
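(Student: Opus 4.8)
The plan is to bound, for every fixed $\ad\in\adsp$ and every fixed pair $m^0,m^1\in\msgsp$ with $m^0\equiv m^1$ (hence $\abs{m^0}=\abs{m^1}$), the quantity $\abs{\Pr[\IND^1(\ad,m^0,m^1,\advA)]-\Pr[\IND^0(\ad,m^0,m^1,\advA)]}$, and then take the maximum over $\ad,m^0,m^1$. The opening observation is that every \emph{structural} quantity produced by the encoding function on input $(\ad,m^b)$ --- the number of blocks, the tweak sequence, whether $\ad$ and $m$ get padded, the resulting offset $\omega\in\Omega$, and the ciphertext length --- depends only on $\abs{\ad}$ and $\abs{m^b}=\abs{m^0}$, hence is identical in the two games; the only part of $\advA$'s view that could depend on the secret bit $b$ is therefore the \emph{content} of the ciphertext and of the binding tag, and only through the concrete bytes of $m^b$. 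I then proceed by two game hops.

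First, replace the decryption oracle by one that returns the suppressed-message symbol on the challenge pair $(\ad,c)$ and immediately returns $\bot$ on every other input $(\ad',c')$, without running $\dec$. The two games differ only if $\advA$ ever submits some $(\ad',c')\neq(\ad,c)$ on which $\dec$ would succeed, which is precisely an integrity forgery against $\pi$; a routine reduction (running the $\IND^b$ experiment for $\advA$ and forwarding its first successful $\Odec$ query as the forgery) bounds this gap by $\Adv_\pi^\int(\advA)$. In the resulting game the decryption oracle is inert: it reveals nothing and issues no random-oracle calls.

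Second, change how the challenge encryption uses the random oracle. Each of its internal evaluations is on a pair $(\hat B_i,\hat C_{i-1})$ --- the $i$-th data block, with the key XORed in on message-carrying blocks and on the first block, together with the incoming chaining value possibly modified by the tweak --- and produces the next chaining value $C_i$; the change is to sample every such $C_i$ freshly and uniformly instead, \emph{without} consulting or updating the random-oracle table. Now the chaining values that mask the message chunks are fresh, independent, uniform strings, so the ciphertext is uniform of the (length-determined) correct length, and the binding tag is the $\abs{\bt}$-truncation of $C_n\oplus\omega$ with $C_n$ fresh uniform and $\omega$ length-determined, hence uniform as well; together with the inert $\Odec$ and the fact that direct random-oracle answers are now independent of $b$, this makes $\advA$'s entire view independent of $b$, so its advantage in this game is $0$. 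The hop is detectable only through the ``bad'' event that some direct random-oracle query of $\advA$ coincides with one of the internal encryption inputs $(\hat B_i,\hat C_{i-1})$, or that two of those internal inputs coincide with each other; the games are identical until bad, so the gap is at most $\Pr[\text{bad}]$.

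It remains to bound $\Pr[\text{bad}]$ by $q^2\cdot 2^{-c}+q\cdot 2^{-\abs{k}}$, and this is where I expect the only genuine difficulty. The key structural feature is that the data part of every message-carrying block (and of the first block) has the key XORed into it, whereas in the modified game $\advA$'s view is independent of $k$; hence the only internal input whose data part $\advA$ can hope to reproduce is that of the first block, $(B_1\oplus k,\IV)$, and hitting it amounts to guessing $k$, contributing at most $q\cdot 2^{-\abs{k}}$. For every other internal input $(\hat B_i,\hat C_{i-1})$ with $i\ge 2$, the chain component $\hat C_{i-1}$ is derived from the fresh uniform $C_{i-1}$, which one must argue is information-theoretically hidden from $\advA$ at the moment the bad event is evaluated --- the delicate point being that $C_{i-1}$ is exposed only as a one-time-pad offset and hence as one of at most two candidates (since $b$ is secret), and that even this does not let $\advA$ form a fresh random-oracle query consuming $C_{i-1}$, because the only random-oracle input that takes $C_{i-1}$ as a chain value lies on a key-masked block. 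Granting this, $\advA$ reproduces such an input, and two internal inputs collide, only by guessing a uniform element of $\Sigma^c$, and a standard union/birthday bound over the random-oracle queries and the internal inputs yields the term $q^2\cdot 2^{-c}$. Summing the two hops and $\Pr[\text{bad}]$ gives the claimed bound; the integrity reduction of the first hop and the birthday counting are themselves routine.
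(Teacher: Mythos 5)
Your proposal is correct and follows essentially the same route as the paper's own proof: neutralize the decryption oracle via a reduction to integrity (the $\Adv_\pi^\int(\advA)$ term), argue that absent chaining-value collisions (the $q^2\cdot2^{-c}$ birthday term) every compression-function input used for encryption is fresh and key-masked, and bound the adversary's ability to reproduce such an input by the probability of guessing the key (the $q\cdot2^{-\abs{k}}$ term). Your write-up is simply a more explicit, identical-until-bad rendering of the paper's terser argument, including the correct observation that partial exposure of chaining values through the one-time-pad ciphertext is harmless because the corresponding data blocks remain key-masked.
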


\begin{proof}
    Other than the challenge pair $(\ad,c)$, we can assume the decryption oracle rejects all queries by~$\advA$.
    Otherwise $\advA$ would immediately win the integrity game and the proposition holds.
    Encryption is done by XORing the message with the chaining variable.
    As long as the chaining variable never repeats, each input to $H$ is a fresh query that has not been seen before.
    Then $H$ will provide fresh, uniformly random output, as it is a random oracle.
    By a standard birthday argument we can bound the probability of a collision by $q^2\cdot2^{-c}$.
    Now let us assume there is no collision.
    Each chaining variable that is used to encrypt is output of a query to $H$ that XORed the key $k$ with the input.
    Additionally each block that has message data as input is also XORed with the key $k$.
    Thus if $\advA$ does not know $k$ it cannot query $H$ to obtain the chaining variable.
    The key is only used with input to the compression function,
    and since $H$ is a random oracle,
    $\advA$ can only learn by guessing the input and checking the random oracle output.
    However, this has a success probability of at most $q\cdot2^{-\abs{k}}$.
\end{proof}

Let $\bar{H}\colon\Sigma^d\times\bits\times\Sigma^c\to\Sigma^c$ be a random oracle.
We now consider an instantiation with a tweakable compression function $F$.
We replace $F$ with random oracle $\bar{H}$.
\begin{proposition}
    \label{thm:ets:int2}
    Let $\pi$ be the construction given in \figref{fig:encrypttoself:encoencr},
    $\bar{H}$ a random oracle replacing the tweakable compression function,
    $\advA$ an adversary,
    \allowbreak
    $\Adv_\pi^\int(\advA)$ the advantage that $\advA$ has against $\pi$ in the integrity game of \figref{fig:encrypttoself:games}
    and $q$ the number of random oracle queries (either directly or indirectly via $\Odec$).
    We have,
    \begin{align*}
        \Adv_\pi^\int(\advA)\leq q^2\cdot2^{-c}+q\cdot2^{-\abs{\bt}}.
    \end{align*}
\end{proposition}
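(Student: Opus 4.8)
The plan is to replay the proof of Proposition~\ref{thm:ets:int} almost verbatim; in fact it becomes slightly simpler in the tweakable setting, because $\bar H$ now takes the tweak as a first-class input and there is no ``modified chaining value'' to account for. I would begin by fixing arbitrary $\ad\in\adsp$, $m\in\msgsp$ and bounding $\Pr[\INT(\ad,m,\advA)]$. The game draws a uniform $k\in\keysp$ and sets $(\bt,c)=\enc(k,\ad,m)$; a winning $\advA$ must submit $(\ad',c')\neq(\ad,c)$ on which $\dec$ does not fail, which forces the recomputed tag to satisfy $\bt'=\bt$. Let $S,\omega$ be the sequence of $(B,t)$ pairs and the offset produced by the encoding of $(\ad,m)$, and let $S',\omega'$ be those underlying the successful decryption (i.e.\ the encoding of $(\ad',m')$ for the recovered $m'$); by injectivity of the encoding, $S\neq S'$ or $\omega\neq\omega'$.

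First I would rule out $S=S'$: then the engine reaches the same final chaining value $C_n=C'_{n'}$, so $\bt'=\bt$ would additionally require $\omega=\omega'$ (the constants $\omega_0,\omega_1$ stay distinct after truncation), contradicting the injectivity dichotomy. Hence $S\neq S'$. If moreover $\omega\neq\omega'$, then $\bt'=\bt$ says the first $\abs{\bt}$ bits of $\bar H(B'_{n'},t'_{n'},C'_{n'-1})$ equal the first $\abs{\bt}$ bits of the fixed target $C_n\oplus\omega\oplus\omega'$, a value independent of this last query, so the random oracle matches it with probability at most $q\cdot2^{-\abs{\bt}}$ across all queries.

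The core case, $S\neq S'$ with $\omega=\omega'$, is handled by a backward walk along the two chains. Equality of the truncated final chaining values means the first $\abs{\bt}$ bits of $\bar H(B_n,t_n,C_{n-1})$ and of $\bar H(B'_{n'},t'_{n'},C'_{n'-1})$ agree: if the two oracle inputs differ, this is a partial-second-preimage event of probability $\leq q\cdot2^{-\abs{\bt}}$; otherwise the full triples coincide, in particular $C_{n-1}=C'_{n'-1}$, and I repeat the step one position earlier. Because the tweak is part of the oracle input here, identity of truncated outputs under distinct inputs is always a genuine $\bar H$-collision --- no $C\oplus\tau$ rewriting is needed, unlike in the SHA2 instantiation --- so a standard birthday argument caps the total collision probability over the at most $q$ evaluations of $\bar H$ by $q^2\cdot2^{-c}$. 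Iterating, one reaches indices with $C_{n-\delta}=C'_{n'-\delta}=\IV$; since $S\neq S'$, at least one chain is not yet exhausted, say $n'-\delta>0$, whence $\bar H(B'_{n'-\delta},t'_{n'-\delta},C'_{n'-\delta-1})=\IV$ is a preimage of $\IV$, occurring with probability at most $q\cdot2^{-c}$ (absorbed into the collision bound, exactly as in Proposition~\ref{thm:ets:int}). Summing the surviving terms yields $\Adv_\pi^\int(\advA)\le q^2\cdot2^{-c}+q\cdot2^{-\abs{\bt}}$. I expect no real obstacle: the only thing requiring care is organizing the backward induction so that the birthday and preimage estimates are each charged once rather than double-counted, which proceeds just as in the non-tweakable case.
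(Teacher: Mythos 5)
Your proposal is correct and follows essentially the same route as the paper's own proof: the same case split on $S=S'$ versus $S\neq S'$ and $\omega=\omega'$ versus $\omega\neq\omega'$, the same backward walk along the two chains with partial-(second-)preimage and birthday-collision bounds, and the same terminal preimage-of-$\IV$ argument. Your observation that the tweak being a first-class oracle input removes the $C\oplus\tau$ rewriting is exactly the simplification the paper's proof of this proposition exhibits relative to Proposition~\ref{thm:ets:int}.
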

\begin{proof}
    For all $\ad\in\adsp,m\in\msgsp$
    we will show that
    \begin{align*}
        \Pr[\INT(\ad,m,\advA)]\leq q^2\cdot2^{-c}+q\cdot2^{-\abs{\bt}}.
    \end{align*}
    Let $\ad\in\adsp$ be associated data and let $m\in\msgsp$ be a message.
    The game $\INT(\ad,m,\advA)$ samples a uniformly random key $k\in\keysp$ and computes $(\bt,c)=\enc(k,\ad,m)$.
    $\advA$ wins the $\INT$ game if it provides a pair $(\ad',c')\neq(\ad,c)$ such that $\dec(k,\bt,ad',c')$ succeeds, which only happens if $\bt'=\bt$.
    Recall the encoding function outputs a sequence $S$ of $(B,t)$ pairs and an offset $\omega$.
    Because the encoding function is injective we must have $S'\neq S$ or $\omega'\neq\omega$.
    Let us first assume $S'=S$.
    Let $C_n$ denote the final chaining variable.
    Because the sequences are equal, we will arrive at $C'_n=C_n$.
    We must have $\omega'\neq\omega$, but clearly $C_n\oplus\omega_0$ is not equal to $C_n\oplus\omega_1$ (even after truncation),
    that is, $\bt'\neq\bt$.
    We have a contradiction and conclude $S'\neq S$.

    For the case $S'\neq S$, let us now assume the subcase $\omega'\neq\omega$.
    The first $\abs{\bt}$ bits of $C'_{n'}$ must equal the first $\abs{\bt}$ bits of $C_n\oplus\omega\oplus\omega'$, i.e., $\advA$ must find a partial preimage.
    Because $\bar{H}$ is a random oracle, $\advA$ would succeed with probability at most $q\cdot2^{-\abs{\bt}}$, where $q$ is the number of queries.
    In the other subcase we have $\omega'=\omega$.
    Then the first $\abs{\bt}$ bits of $C'_{n'}$ must equal the first $\abs{\bt}$ bits of $C_n$, i.e.,
    the first $\abs{\bt}$ bits of $\bar{H}(B'_{n'},t_{n'},C'_{n'-1})$ must equal the first $\abs{\bt}$ bits of $\bar{H}(B_n,t_n,C_{n-1})$.
    If the inputs are not equal, $\advA$ has found a partial second preimage.
    Since $\bar{H}$ is a random oracle, $\advA$ would succeed with probability at most $q\cdot2^{-\abs{\bt}}$, where $q$ is the number of oracle queries.
    However, if the inputs are equal we know $C'_{n'-1}=C_{n-1}$.
    Thus, either $\advA$ has found a collision or $C'_{n'-1}=C_{n-1}$.
    We can repeat the argument to reason about $C'_{n-2},C_{n-2}$, etc.
    By a standard birthday argument we can bound the probability of a collision by $q^2\cdot2^{-c}$.

    If we eventually conclude $C'_{n'-\delta}=C_{n-\delta}=\IV$, we know one of the sequences is longer, i.e., $n'-\delta>0$ or $n-\delta>0$.
    Otherwise the sequences would be equal, which is excluded by the injectivity of the encoding function.
    In the case $n-\delta>0$, there has been a collision in the hash function, we have already bounded this probability above.
    Thus, let us assume $n'-\delta>0$.
    We have $\bar{H}(B_{n'-\delta},t_{n'-\delta},C_{n'-\delta-1})=\IV$.
    Thus $\advA$ has found a preimage of $\IV$.
    Because $\bar{H}$ is a random oracle, $\advA$ would succeed with probability at most~$q\cdot2^{-c}$.
\end{proof}

\begin{proposition}
    \label{thm:ets:ind2}
    Let $\pi$ be the construction given in \figref{fig:encrypttoself:encoencr},
    $\bar{H}$ a random oracle replacing the tweakable compression function,
    $\advA$ an adversary,
    \allowbreak
    $\Adv_\pi^\ind(\advA)$ the advantage that $\advA$ has against $\pi$ in the indistinguishability games of \figref{fig:encrypttoself:games}
    and $q$ the number of random oracle queries (either directly or indirectly via $\Odec$).
    We have,
    \begin{align*}
        \Adv_\pi^\ind(\advA)\leq q^2\cdot2^{-c}+q\cdot2^{-\abs{k}}+\Adv_\pi^\int(\advA).
    \end{align*}
\end{proposition}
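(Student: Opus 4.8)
The plan is to adapt the argument used for Proposition~\ref{thm:ets:ind} essentially line for line, since passing from the non-tweakable random oracle $H$ to the tweakable random oracle $\bar H\colon\Sigma^d\times\bits\times\Sigma^c\to\Sigma^c$ does not alter the structure of the reduction: the tweak is merely one further coordinate of the oracle input, fixed deterministically by the encoding. First I would reduce away the decryption oracle. I claim that, other than the challenge pair $(\ad,c)$ itself, the decryption oracle can be assumed to reject every query of $\advA$: any accepted query would be a forgery in the integrity game, an event already bounded by $\Adv_\pi^\int(\advA)$ (and by Proposition~\ref{thm:ets:int2} in particular). This accounts for the additive $\Adv_\pi^\int(\advA)$ term and lets me assume in the remainder that $\Odec$ gives the adversary nothing.

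Next I would analyse the single challenge encryption in isolation. Each ciphertext block is a message chunk XORed with a chaining value $C_1,C_2,\ldots$, and each such chaining value is the output of a $\bar H$-query whose block input has the key $k$ XORed into (the relevant prefix of) it. The core is a two-part argument. (i) As long as no two of the $\bar H$-inputs arising inside the encryption coincide, every $C_i$ is a freshly sampled uniform string, so the ciphertext is uniform and independent of the underlying messages; this fails only on an internal collision of the chaining values, which a standard birthday bound caps at $q^2\cdot 2^{-c}$. (ii) The adversary cannot itself evaluate $\bar H$ on any of these inputs without knowing $k$, because each relevant block carries $k$; since $k$ is used only as compression-function input, the only way to learn it is to guess it and test the guess against an oracle answer, succeeding over $q$ queries with probability at most $q\cdot 2^{-\abs{k}}$. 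Summing the three contributions yields the stated bound $q^2\cdot 2^{-c}+q\cdot 2^{-\abs{k}}+\Adv_\pi^\int(\advA)$.

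I expect the only genuine subtlety, and it is a minor bookkeeping point, to be making precise that introducing the tweak coordinate creates neither new collisions nor new avenues for the adversary to recover $k$: one must observe that the encoding assigns each tweak deterministically from $(\ad,m)$ and that the tweak never carries key material, so both the collision analysis in part~(i) and the ``key remains hidden'' analysis in part~(ii) go through unchanged with $\bar H$ in place of $H$. Everything else is a faithful transcription of the proof of Proposition~\ref{thm:ets:ind}.
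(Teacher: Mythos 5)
Your proposal is correct and takes essentially the same approach as the paper, whose proof of Proposition~\ref{thm:ets:ind2} is a near-verbatim adaptation of the proof of Proposition~\ref{thm:ets:ind} with $\bar{H}$ in place of $H$: reject all non-challenge decryption queries at cost $\Adv_\pi^\int(\advA)$, bound chaining-value collisions by $q^2\cdot2^{-c}$, and bound the adversary's ability to query $\bar{H}$ on the key-masked inputs by $q\cdot2^{-\abs{k}}$. Your explicit remark that the tweak coordinate is determined by the encoding and carries no key material is a point the paper leaves implicit, but it does not change the argument.
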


\begin{proof}
    Other than the challenge pair $(\ad,c)$, we can assume the decryption oracle rejects all queries by~$\advA$.
    Otherwise $\advA$ would immediately win the integrity game and the proposition holds.
    Encryption is done by XORing the message with the chaining variable.
    As long as the chaining variable never repeats, each input to $\bar{H}$ is a fresh query that has not been seen before.
    Then $\bar{H}$ will provide fresh, uniformly random output, as it is a random oracle.
    By a standard birthday argument we can bound the probability of a collision by $q^2\cdot2^{-c}$.
    Now let us assume there is no collision.
    Each chaining variable that is used to encrypt is output of a query to $\bar{H}$ that XORed the key $k$ with the input.
    Additionally each block that has message data as input is also XORed with the key $k$.
    Thus if $\advA$ does not know $k$ it cannot query $\bar{H}$ to obtain the chaining variable.
    The key is only used with input to the compression function,
    and since $\bar{H}$ is a random oracle,
    $\advA$ can only learn by guessing the input and checking the random oracle output.
    However, this has a success probability of at most $q\cdot2^{-\abs{k}}$.
\end{proof}

\section{Implementation}

We implemented three versions of the EtS primitive.
We developed optimized C~code for the padding scheme and encryption engine from \figref{fig:encrypttoself:encoencr},
based on the compression functions of common hash functions.
Specifically, our EtS implementations are based on the compression functions of SHA256, SHA512, and BLAKE2 \cite{NIST:FIPS180-4,rfc7693}.
We chose these functions as all three of them are ARX designs (Add--Rotate--Xor) which makes them particularly efficient in software implementations.
While SHA256 and SHA512 are more widely standardized and used than BLAKE2,
only the latter is a HAIFA construction and tweakable without ad-hoc modifications.
Note that due to the used internal register size of 32~bits,
SHA256 is most competitive on 32-bit CPUs;
in contrast,
SHA512 and BLAKE2 use 64-bit registers and thus perform best on 64-bit CPUs.

We implemented all components of EtS in plain~C,
including the compression functions, the encoding schemes, and the EtS framework.
In addition we implemented a range of self-tests and provide test vectors.
We note that while in particular the compression functions would be good candidates for being re-implemented in assembly for further efficiency improvements,
we believe that, as all three compression functions are ARX designs, the penalty of not hand-optimizing is not too drastic.

We released the source code of our implementation as open source software.
The terms of use are those granted by the Apache license
\unskip\footnote{\url{https://www.apache.org/licenses/LICENSE-2.0}}.
The code is available at \url{https://github.com/cryptobertram/encrypt-to-self}.

We conducted timing measurements for our implementations.
We measured on two devices:
on a roughly 9-year old CPU that identifies itself as \texttt{Intel Core i3-2350M CPU @ 2.30GHz},
and on a more recent CPU of the type \texttt{Intel Core i5-7300U CPU @ 2.60GHz}.
The results are shown in Table~\ref{tab:timings}.
The timings were taken for various message lengths, with a 16~byte associated data input in call cases.
Note that the BLAKE2 based version clearly outperforms the others for all tested message lengths.
Further, SHA512 is generally faster than SHA256
(except for messages that are so short that one SHA256 compression function invocation is sufficient to fully encrypt the message).

\begin{table}[ht]
  \centering
  \caption{Timings (in microseconds) of EtS implementation}
  \includegraphics{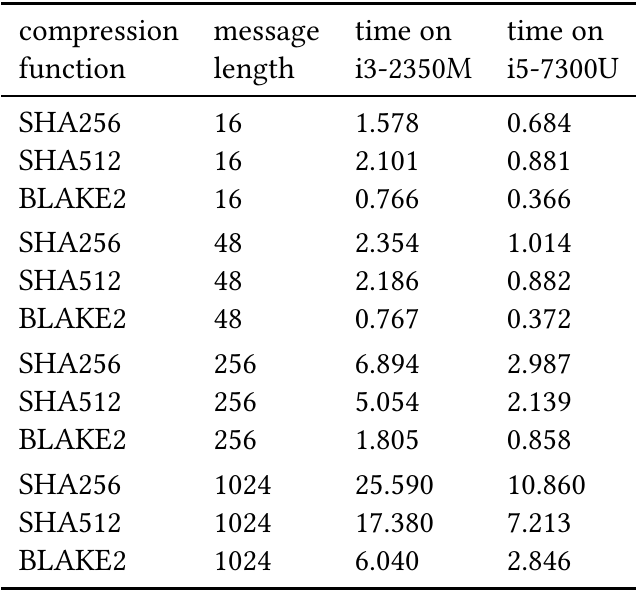}
  \label{tab:timings}
\end{table}

\begin{acks}
  We thank the reviewers of CYSARM'20 for their helpful comments and feedback.
  The research of Pijnenburg was supported by the \grantsponsor{EPSRC}{EPSRC and the UK government}{https://gow.epsrc.ukri.org/NGBOViewGrant.aspx?GrantRef=EP/P009301/1} as part of the Centre for Doctoral Training in Cyber Security at Royal Holloway, University of London (\grantnum{EPSRC}{EP/P009301/1}).
  The research of Poettering was supported by the \grantsponsor{Horizon}{European Union's Horizon~2020 project FutureTPM}{https://cordis.europa.eu/project/id/779391} (\grantnum{Horizon}{779391}).
\end{acks}

\end{document}